\newtheorem{mydef}{\bf Definition}
\newtheorem{mythm}{\bf Theorem}
\newtheorem{myprob}{\bf Problem}
\newtheorem{mypro}{\bf Proposition}
\newtheorem{myexm}{\bf Example}
\newtheorem{remark}{\bf Remark}
\newtheorem{proof}{\bf Proof}
\title{\LARGE \bf Efficient STL Control Synthesis under \\ Asynchronous Temporal Robustness Constraints}
\author{Xinyi Yu, Xiang Yin, and Lars Lindemann
	\thanks{This work was supported by  the National Natural Science Foundation of China (62061136004, 62173226, 61833012). 
Xinyi Yu and Xiang Yin are with Department of Automation and Key Laboratory of System Control and Information Processing, Shanghai Jiao Tong University, Shanghai 200240, China.
	e-mail: {\tt\small $\{$yuxinyi-12, yinxiang$\}$@sjtu.edu.cn} 
  Lars Lindemann is with Thomas Lord Department of Computer Science, University of Southern California, Los Angeles, CA 90089, USA.
	e-mail: \tt\small llindema@usc.edu}
}
\date{}
\begin{document}
	
\maketitle
\pagestyle{plain}
\begin{abstract}
In time-critical systems, such as  air traffic control systems, it is crucial to design control policies that are robust to timing uncertainty. Recently, the notion of Asynchronous Temporal Robustness (ATR) was proposed to capture the robustness of a system trajectory against individual time shifts in its sub-trajectories. In a multi-robot system, this may correspond to individual robots being delayed or early. Control synthesis under ATR constraints is challenging and has not yet been addressed. In this paper, we propose an efficient control synthesis method under ATR constraints which are defined with respect to simple safety or  complex signal temporal logic specifications.
Given an ATR bound, we compute a sequence of control inputs so that the specification is satisfied by the system as long as each sub-trajectory is shifted not more than the ATR bound. We avoid combinatorially exploring all  shifted sub-trajectories by first identifying redundancy between them. We capture this insight by the notion of instant-shift pair sets, and then propose an optimization program that enforces the specification only over the instant-shift pair sets. We show soundness and completeness of our method and  analyze its computational complexity.  Finally, we present various illustrative case studies. 
\end{abstract}

\section{Introduction}\label{sec:intro}

The analysis and design of distributed cyber-physical systems has found much attention in industry and academia in recent years. These systems are often time-critical and require accurate timing in the satisfaction of system requirements \cite{sha2004real}, e.g., in air traffic control systems and multi-robot systems. Therefore, it is important to ensure robustness of the employed control algorithms against timing uncertainties.

Consider the scenario that is shown in Fig.~\ref{fig:illustration} where
the red agent starts from the lower left corner and is required to arrive in the cyan region (represented by the constraint $\mu_1(x)\geq 0$) between the time interval $6-9$  to perform an action. The blue agent, on the other hand, starts from the lower right corner and needs to track the red agent with an accuracy of $D$ meters between the time interval $6-9$. 
Eventually, the red and blue agents should arrive in the grey top left and the grey top right regions (represented by the constraints $\mu_2(x)\geq 0$ and $\mu_3(x)\geq 0$), respectively, between the time interval $17-20$. Fig.~\ref{fig:illustration} shows two pre-planned paths that satisfy this task. However, during the real-time execution of this path, timing uncertainty may lead to a task violation. For example, an agent may  be delayed due to a system failure or too early due to velocity control errors. In fact, if the blue agent is delayed by two time steps, then the distance  to the red agent may be larger than $D$. In this work, we propose a control synthesis method to compute temporally robust trajectories under high-level tasks described by constraint functions or complex signal temporal logic specifications. 

\begin{figure}[t]
	\centering
	\includegraphics[width = 230pt]{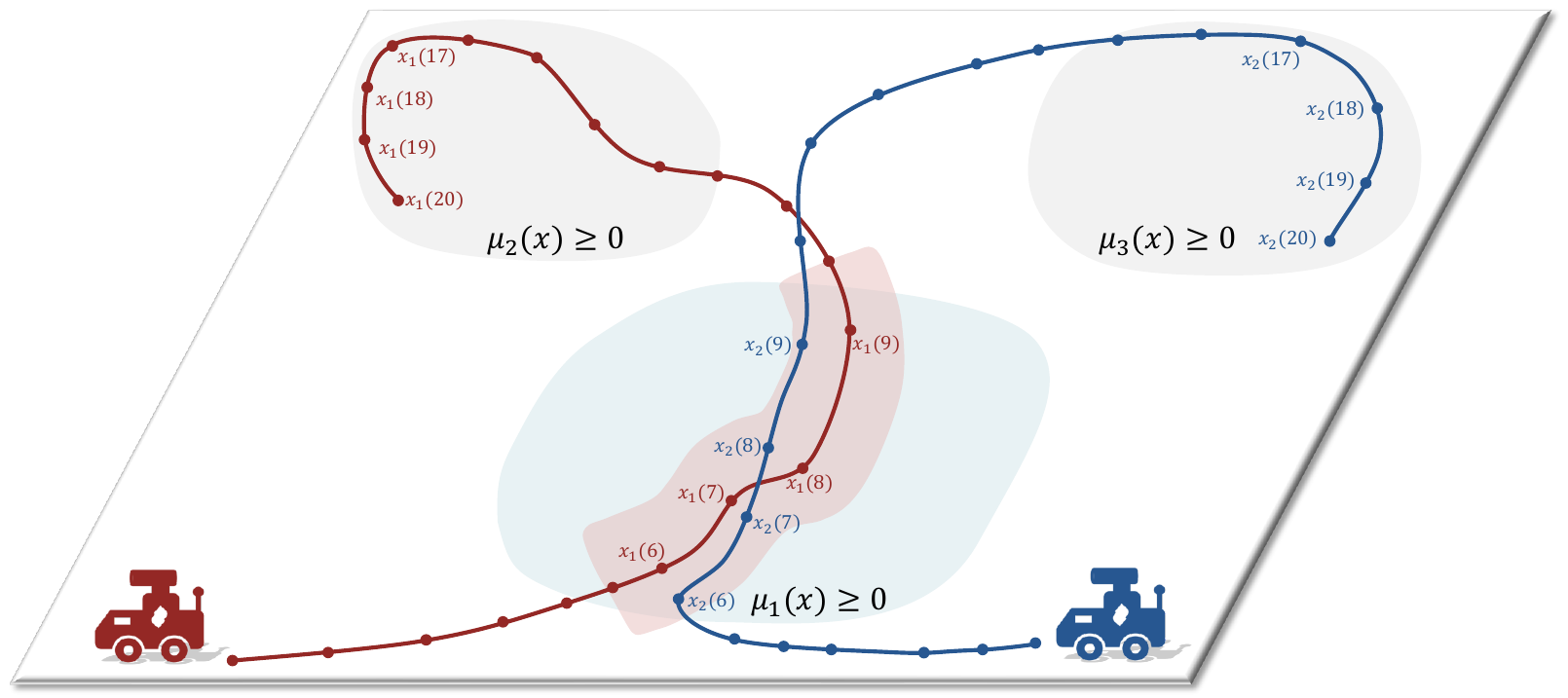}
        \vspace{3pt}
	\caption{Case study illustrating the effect of timing uncertainties, e.g., if the blue agent is delayed and not able to track the red agent during a pre-specified tracking interval.}
	\label{fig:illustration}
\end{figure}

\subsection{Related Work}
Control synthesis problems of time-critical systems have been studied from various perspectives. In one direction, the minimum time control problem has been well studied in the literature, see e.g., \cite{dubowsky1989planning, faroni2022safety, van2011model, liu2022time}. While minimum time control is important in many cases, it does not account for timing uncertainties.  In another direction, resilient control synthesis studies how to quickly recover from system failures 
\cite{chen2022stl, aksaray2021resilient}. The authors in these works consider not only simple navigation tasks, but also complex Signal Temporal Logic (STL) specifications \cite{maler2004monitoring} as we do in our work. In yet another direction, the works in \cite{buyukkocak2022temporal, penedo2020language, vasile2017time} consider temporal relaxations to find minimally violating solutions. However, while all of these works consider temporal aspects in the control design, none of these works directly consider time shifts in the agent trajectory as e.g. needed in our motivating example.

For this reason, different authors have investigated how to preserve task satisfaction under such time shifts, e.g., as in 
\cite{desai2017drona}. Robust control synthesis under time shifts in agent trajectories is studied for Linear Temporal Logic (LTL) specifications in \cite{ulusoy2012robust,sahin2019multirobot, sahin2017synchronous}. These works design control laws which can also account for time scaling effects, e.g., when an agent pauses or speeds up, and not only time shifts.  In our work, we focus on efficient control synthesis under STL specifications.

Temporally robust synthesis under STL specifications is considered in \cite{lin2020optimization} by jointly maximizing spatial and temporal robustness. Specifically, temporal robustness for STL is proposed in \cite{donze2010robust} and control synthesis under these constraints is done in our previous works \cite{rodionova2021time, rodionova2022combined, rodionova2022temporal}. However, the temporal robustness notion in these works considers \emph{time shifts in the atomic elements} (called predicates) of STL which are not always interpretable in terms of \emph{time shifts in the trajectory}.
For example, in the motivating example introduced in Fig. \ref{fig:illustration}, the \emph{predicate} describing the relative position of the red and blue robot (staying $D$-close) contains two \emph{trajectories}. 
The works \cite{rodionova2021time, rodionova2022combined, rodionova2022temporal} consider the shift of this \emph{predicate} as opposed to the shift of the two trajectories, i.e., it cannot deal with shifts in the \emph{trajectory}. 
We are inspired by the notion of asynchronous temporal robustness from \cite{lindemann2022temporal} that considers asynchronous time shifts in the agent trajectories, which is in spirit similar to \cite{sahin2019multirobot, sahin2017synchronous}.

\subsection{Contributions}
We propose an efficient method for control synthesis under asynchronous temporal robustness constraints. This notion was recently proposed in \cite{lindemann2022temporal} for temporal robustness analysis by capturing timing uncertainty via time shifts in the system's sub-trajectories. Our contributions are as follows:
\begin{itemize}
    \item We propose an optimization algorithm that synthesizes temporally robust trajectories $x(k)$ under constraints of the form  $c(x(k))\ge 0$. The encoding is efficient as it avoids combinatorially exploring all time shifts. \vspace{-6pt}
    \item We extend the idea from the above algorithm to complex signal temporal logic specifications, and we synthesize temporally robust trajectories under such specifications. The  algorithm is a Mixed Integer Program (MIP). \vspace{-6pt}
    \item We analyze the computational complexity of both algorithms and show their soundness and completeness. \vspace{-6pt}
    \item We present case studies that showcase the efficacy of our method. We empirically analyze the computational complexity and observe a computational speed-up of more than a factor of ten compared to a naive approach.
\end{itemize}

\subsection{Organization}
 We present background and problem formulation in Section \ref{sec:prob}. We then solve the temporally robust control problem under asynchronous temporal robustness constraints for constraint functions $c$ and signal temporal logic specifications in Sections \ref{sec:sol_c} and \ref{sec:sol_stl}, respectively. Simulations are presented in Section \ref{sec:case}, and Section \ref{sec:con} concludes our work.

\section{Background and Problem Formulation}\label{sec:prob}
\textbf{Notation}: We denote by $\mathfrak{F}(T, S)$ the set of all functions mapping from the domain $T$ into the domain $S$. Let $\mathbb{R}$ and $\mathbb{Z}$ be the set of real numbers and integers, respectively. As we consider a discrete-time setting, we use $[a,b]$ as a shorthand notation for the discrete-time interval $[a,b]\cap \mathbb{Z}$.

\subsection{System Model}

We consider a team of $N$ agents where each agent $i\in\{1,\hdots,N\}$ is described by the discrete-time dynamics 
\begin{align}
	x_i(k+1) = f_i(x_i(k), u_i(k)),\nonumber
\end{align}
where $x_i: \mathbb{Z} \to \mathcal{X}_i \subseteq \mathbb{R}^{n_i}$ is the $n_i$-dimensional state and $u_i: \mathbb{Z} \to \mathcal{U}_i \subset \mathbb{R}^{m_i}$ is the $m_i$-dimensional control input. We can compactly write the system by stacking the dynamics as 
\begin{equation}\label{eq:system}
	x(k+1) = f(x(k), u(k)),
\end{equation}
where 
$x := [x_1, \dots, x_N]: \mathbb{Z} \to \mathcal{X} \subseteq \mathbb{R}^{n}$ and $u := [u_1, \dots, u_N]: \mathbb{Z} \to \mathcal{U} \subset \mathbb{R}^{m}$ are the stacked state and control inputs with corresponding domains $\mathcal{X}:=\mathcal{X}_1\times \hdots\times \mathcal{X}_N$ and $\mathcal{U}:=\mathcal{U}_1\times \hdots\times \mathcal{U}_N$ and dimensions $n:=\Sigma_{i=1}^N n_i$ and $m:=\Sigma_{i=1}^N m_i$. Consequently, we have  $f(x(k), u(k)) := [f_1(x_1(k), u_1(k)), \dots, f_N(x_N(k), u_N(k))]$.

We focus on the multi-agent system in equation \eqref{eq:system} as it makes time shifts, which are formally introduced later, interpretable.
For instance, the agent $i$ is delayed by $\kappa_i$ time units if the state at time $k$ is $x_i(k-\kappa_i)$, $\kappa_i > 0$.  
If there were dynamical couplings, e.g., if $f_i$  also depends on $x_j$ for $j\neq i$, this interpretation would not be possible. On the other hand, note that an agent may be described by a state that can be decoupled further, e.g., the two-dimensional motion of an omnidirectional robot that can be controlled independently.
One could, in principle, split these agent dynamics further into two sub-systems and consider separate time shifts.

\subsection{Task specification}
We consider two forms of task specifications for the system in \eqref{eq:system}. We first consider tasks expressed by the constraint function $c:\mathbb{R}^n \to \mathbb{R}$, and then consider more complex tasks in the form of Signal Temporal Logic (STL).

\textbf{Constraint functions.} 
One way of defining tasks is by means of constraints $c_j(x(k)) \geq 0$ which are imposed for all times $k$ specified by a corresponding time set $\mathbb{T}_j$ for $j \!\in\! \{1, \dots, n_c\}$ where $n_c$ is the number of constraint functions. We denote by $c$ the set of all tasks consisting of the function $c_j(x(k)), j \!\in\! \{1, \dots, n_c\}$, and we note that $c$ is satisfied if
\[
     c_j(x(k)) \geq 0, \;\; \forall j \!\in\! \{1, \dots, n_c\}, \forall k \!\in\! \mathbb{T}_j.    
\]
For later convenience, we now define the satisfaction function $\beta^c: \mathfrak{F}(\mathbb{Z}, \mathbb{R}^n) \to \{-1, 1\}$ that indicates if the state sequence $x$ satisfies the task $c$ as
\begin{align}
	\beta^c(x) := \left\{
		\begin{array}{cl}
			1 & \text{if} \ \inf_{k\in \mathbb{T}_j} c_j(x(k)) \geq 0, \forall j \!\in\! \{1, \dots, n_c\}\\
			-1 & \text{otherwise}.
		\end{array}
	\right. \nonumber
\end{align}
Note that $\beta^c(x) = 1$ represents task satisfaction, while $\beta^c(x) = -1$ indicates task violation. 
In this paper, we assume that all the time sets $\mathbb{T}_j, j \!\in\! \{1, \dots, n_c\}$ are bounded.


\begin{myexm}\label{ex:1}
	Consider the setting from the motivating example in Section \ref{sec:intro} where the task $c$ with $n_c=2$ can be expressed by $c_1(x(k))$ and $c_2(x(k))$ as follows,
    \begin{align}
        c_1(x(k)) & := \min(\mu_1(x_1(k)), D - ||x_1(k) - x_2(k)||_{2}), \nonumber \\
        c_2(x(k)) & := \min(\mu_2(x_1(k)), \mu_3(x_2(k))), \nonumber
    \end{align}
    with $\mathbb{T}_1:= [6,9]$ and $\mathbb{T}_2:= [17,20]$
    where $D$ is the required maximum distance between two agents.
    The red and blue agents' states are $x_1$ and $x_2$, respectively.
\end{myexm}

\textbf{Signal Temporal Logic.} Signal temporal logic was introduced in \cite{maler2004monitoring}, and we here only give a brief introduction.
The syntax of STL formulae is as follows 
\[
\phi ::=   \top\mid \pi^\mu \mid \neg \phi \mid \phi_1 \wedge \phi_2 \mid \phi_1 \textbf{U}_{[a,b]} \phi_2,
\]
where $\top$ is the \textsf{true} symbol and $\pi^\mu$ is a predicate whose truth value is determined by the sign of an underlying predicate function $\mu:\mathbb{R}^n \to \mathbb{R}$, i.e., $\pi^\mu=\top$ if and only if $\mu(x(k)) \geq 0$.
The symbols $\neg$ and $\wedge$ denote the standard Boolean operators ``negation" and ``conjunction", respectively, and $\textbf{U}_{[a,b]}$ is the temporal operator ``\emph{until}" with $a,b\in \mathbb{Z}$. We consider bounded STL formulae where the time interval $[a,b]$ is bounded. These operators can be used to define ``disjunction"  by $\phi_1 \vee \phi_2:=\neg(\neg \phi_1 \wedge \neg \phi_2)$, ``implication" by $\phi_1 \to \phi_2:= \neg \phi_1 \vee   \phi_2$, ``eventually" by $\mathbf{F}_{[a,b]} \phi:= \top \mathbf{U}_{[a,b]} \phi$ and ``always" by $\mathbf{G}_{[a,b]} \phi:=\neg \mathbf{F}_{[a,b]} \neg \phi$. 
The semantics of an STL formula $\phi$  can, similar to the constraint function $c$, be captured by a satisfaction function $\beta^\phi: \mathfrak{F}(\mathbb{Z}, \mathbb{R}^n)  \to \{-1,1\}$ whose formal definition is omitted due to space limitation and details can be found in \cite{maler2004monitoring}. For a bounded STL formula $\phi$, the function $\beta^\phi(x)$ is completely determined by a finite state sequence $x$ of length $T_\phi$, which is the \emph{formula horizon} as defined in \cite{dokhanchi2014line, sadraddini2015robust}. 

\begin{myexm}
	Consider again the workspace from Section \ref{sec:intro} as well as the following STL formula
    \begin{align}
        \phi \!:=\! \mathbf{G}_{[6,9]} \pi_1^{\mu_2} \wedge \mathbf{G}_{[6,9]} \pi_2^{\mu_3}  \wedge  \mathbf{F}_{[15, 17]} \mathbf{G}_{[0, 3]} (\pi_1^{\mu_1} \wedge \pi^{\mu_4} )
        , \nonumber
    \end{align}
    where the subscript $i\!\in\! \{1,2\}$ within $\pi_i^\mu$ means that the predicate $\pi^\mu$ is evaluated over the state $x_i(k)$ of agent $i$, and   
    $\mu_4 := D - ||x_1(k) - x_2(k)||_{2}$. In words, the formula $\phi$ means ``red and blue agents should stay in their respective grey areas within the time interval $[6, 9]$, and after that starting from a certain time instant between the time interval $[15, 17]$, the red agent should stay in the cyan region for at least 4 time instants while the distance between the red and the blue agent is at most $D$ meters''. Note that $T_\phi=20$ and that $\phi$ is more complex than the task in Example \ref{ex:1}.
\end{myexm}
\vspace{6pt}

\subsection{Asynchronous Temporal Robustness (ATR)}
Asynchronous temporal robustness was introduced in \cite{lindemann2022temporal} to capture robustness of $x$ against time shifts with respect to task satisfaction.
To capture time shifts in $x$, we introduce the definition of the time shifted state. 
Given the state $x$, we denote by $x^{\bar{\kappa}}$ the time shifted stacked state with time shift $\bar{\kappa} := (\kappa_1, \dots, \kappa_N)\in\mathbb{R}^N$. Formally, we let
\begin{align}\label{eq:time_shift}
	x^{\bar{\kappa}}(k) := [x_1(k+\kappa_1), \dots, x_N(k+\kappa_N)]. 
\end{align}
Intuitively, the time shifted state $x^{\bar{\kappa}}$ is the stacked state with each state $x_i$ shifted by $\kappa_i$ where $\kappa_i<0$ will indicate a delay and $\kappa_i>0$ will indicate being early. This definition is useful in time-critical systems, e.g., where a robot is delayed by $\kappa_i$ to perform an action.  
We remark that the decoupled dynamics in \eqref{eq:system} enable us to 
define the shifted state in this way. We will consider control synthesis for a finite time horizon of $T$ and hence modify the time shifted state $x_i(k+\kappa_i)$ to
\begin{align}\label{eq:xi_shift}
	\!\!\!x_i(k\!+\!\kappa_i) \!:=\! \left\{
		\begin{array}{ll}
            \!\! x_i(k\!+\!\kappa_i) & \!\!\!\! \text{if} \ k\!+\!\kappa_i\!\in\![0,T]\\
		\!\! x_i(0) & \!\!\!\! \text{if} \ k\!+\!\kappa_i < 0 \\
            \!\! f_i(x_i(k\!+\!\kappa_i\!-\!1), \mathbf{0}^{m_i}) & \!\!\!\! \text{if} \ k\!+\!\kappa_i>T.
		\end{array}
	\right. 
\end{align}
Particularly, we set the time shifted state to the initial state when $k+\kappa_i<0$ and we forward propagate the state under the zero control input when $k+\kappa_i>T$.  It is worth mentioning that the last condition can be changed to  other reasonable forms, e.g., $x_i(k+\kappa_i)=x_i(k+\kappa_i-1)$.

\emph{Asynchronous Temporal Robustness} \cite{lindemann2022temporal} is used to capture the maximum permissible time shift in the state sequence while the task is still satisfied.
The asynchronous temporal robustness (ATR) $\theta: \mathfrak{F}(\mathbb{Z}, \mathbb{R}^n) \to \mathbb{Z}$ is defined as follows
\begin{align}\label{eq:theta}
		\theta(x) \!:=\! \beta(x) \cdot \max
   \left\{\tau \!\in\! \mathbb{Z} \,\middle\vert\, 
   \begin{array}{cc}
         \!\!\forall  \kappa_1, \dots, \kappa_N \!\in\!   [-\tau, \tau],\\
	  \beta(x^{\bar{\kappa}}) = \beta(x)
  \end{array}
  \right\}, 
\end{align}
where $\beta\in\{\beta^c,\beta^\phi\}$ depends on the task at hand. Consequently, if a state sequence $x$ has an ATR of $\theta':=\theta(x)$ and satisfies the task, i.e., $\beta(x)=1$, then every time shifted signal satisfies the task, i.e., $\beta(x^{\bar{\kappa}})=1$, if $\bar{\kappa}\in[-\theta', \theta']^N$. Note that the interval $[-\theta', \theta']$ is symmetrical, while we, in this paper,  instead consider a more general and asymmetric bound $[\theta_1, \theta_2]$ where we consider $\theta_1$ and $\theta_2$ to be given robustness parameters. This, on the other hand, means that we do not seek the maximum permissible time shift as in equation \eqref{eq:theta}. In the remainder, we assume that $\theta_1\le 0$ (allowing delays) and $\theta_2\ge 0$ (allowing being early).

\subsection{Temporally Robust Control under ATR Constraints}
Given an initial state $x(0)$ of the system in \eqref{eq:system} and an input sequence $u(0\!:\!T-1)$, consider a cost function $J: \mathcal{X} \times \mathcal{U}^{T} \to \mathbb{R}$. Given the robustness bound $[\theta_1,\theta_2]$, the goal of this work is to compute a control sequence $u(0\!:\!T-1)$ which is robust to asynchronous time shifts $\bar{\kappa}\in[\theta_1,\theta_2]^N$, while minimizing for the cost function $J$. Particularly, we consider two problem formulations, one for constraint functions $c$ and one for STL specifications $\phi$.

\begin{myprob}[ATR Problem for Constraint Functions]\label{prob}\upshape
    Given a  system of the form \eqref{eq:system} with initial state $x(0)
    $, a cost function $J$, a  task $c$, a time horizon $T \!:=\! \max_{j\in \{1, \dots, n_c\}} \max \mathbb{T}_j$, and an ATR bound $[\theta_1, \theta_2]$ with $\theta_1 \leq 0 \leq \theta_2$, compute the optimal control sequence $u^*(0\!:\!T-1)$ which minimizes the cost function $J$ and is robust to time shifts $[\theta_1, \theta_2]$, i.e., 
	\begin{subequations}
		\begin{align}
			& \underset{u^*(0:T-1)}{\text{argmin}} & & J(x(0), u(0\!:\!T-1)) \nonumber \\
			& \text{subject to} & & \nonumber \\
			& &&
			\!\!\!\!\!\!\!\!\!\!\!\!\!\!\!\!\!\!\!\!\!\!\!
			u(0),u(1),\dots, u(T-1) \in \mathcal{U}, \nonumber\\
			& &&
			\!\!\!\!\!\!\!\!\!\!\!\!\!\!\!\!\!\!\!\!\!\!\!
			x(k+1) = f(x(k), u(k)), k=0, \dots, T-1, \nonumber \\
			& &&
			\!\!\!\!\!\!\!\!\!\!\!\!\!\!\!\!\!\!\!\!\!\!\!
		 \beta^c(x^{\bar{\kappa}}) = 1, \; \forall \kappa_1, \dots, \kappa_N \!\in\!   [\theta_1, \theta_2]. \nonumber
		\end{align} 
	\end{subequations} 
\end{myprob}
\vspace{6pt}

Note that the last constraint ensures i) nominal task satisfaction, i.e., $\beta^c(x)=1$, as we assume $\theta_1 \leq 0 \leq \theta_2$, and ii) satisfaction of the ATR bound $[\theta_1,\theta_2]$.

\begin{myprob}[ATR Problem for STL Tasks]\label{prob:stl}\upshape
    Given a  system of the form \eqref{eq:system} with initial state $x(0) 
    $, a cost function $J$, a bounded STL specification $\phi$, a time horizon $T:=T_\phi$, and an ATR bound $[\theta_1, \theta_2]$ with $\theta_1 \leq 0 \leq \theta_2$, compute the optimal control sequence $u^*(0\!:\!T-1)$ which minimizes the cost function and is robust to time shifts $[\theta_1, \theta_2]$, i.e., 
	\begin{subequations}
		\begin{align}
			& \underset{u^*(0:T-1)}{\text{argmin}} & & J(x(0), u(0\!:\!T-1)) \nonumber \\
			& \text{subject to} & & \nonumber \\
			& &&
			\!\!\!\!\!\!\!\!\!\!\!\!\!\!\!\!\!\!\!\!\!\!\!
			u(0),u(1),\dots, u(T-1) \in \mathcal{U}, \nonumber\\
			& &&
			\!\!\!\!\!\!\!\!\!\!\!\!\!\!\!\!\!\!\!\!\!\!\!
			x(k+1) = f(x(k), u(k)), k=0, \dots, T-1, \nonumber \\
			& &&
			\!\!\!\!\!\!\!\!\!\!\!\!\!\!\!\!\!\!\!\!\!\!\!
			\beta^\phi(x^{\bar{\kappa}}) = 1, \; \forall \kappa_1, \dots, \kappa_N \!\in\!   [\theta_1, \theta_2]. \nonumber
		\end{align} 
	\end{subequations} 
\end{myprob}

Now, we note already that the solutions to Problems \ref{prob} and \ref{prob:stl}, which we present in Sections \ref{sec:sol_c} and \ref{sec:sol_stl},  will have different computational complexity. In general, STL control synthesis is  NP-hard and computationally more challenging than dealing with constraint functions $c$.  

\section{Temporally Robust Control for  Constraint Functions}
\label{sec:sol_c}

A straightforward solution to Problem \ref{prob} is to consider all combinations of time shifts $\kappa_1,\hdots,\kappa_N\in[\theta_1, \theta_2]$. Following the definition of the satisfaction function $\beta^c(x)$, we can do so by solving an optimization problem where we introduce the constraint $c_j(x^{\bar{\kappa}}(k))\geq 0$ for each time shift from the set
\[
	\bar{\kappa} \!\in\! \bar{\mathbf{K}} := \{(\kappa_1, \dots, \kappa_N) \mid \forall \kappa_1, \dots, \kappa_N \!\in\! [\theta_1, \theta_2]\}
\]
and for each constraint function $j \!\in\! \{1, \dots, n_c\}$ and for all corresponding effective time instants $k \!\in\! \mathbb{T}_j$, i.e., we enforce
\begin{align}\label{eq:cons}
       c_j(x^{\bar{\kappa}}(k)) \geq 0, \;\; \forall j \!\in\! \{1, \dots, n_c\}, \forall k \!\in\! \mathbb{T}_j, \forall \bar{\kappa} \in \bar{\mathbf{K}}.
\end{align}
However, this method is exhaustive and results in $\Theta^N |\mathbb{T}_j|$ constraints for each  $j \!\in\! \{1, \dots, n_c\}$ where $\Theta := \theta_2-\theta_1+1$  is the length of the interval $[\theta_1,\theta_2]$ and where  $|\mathbb{T}_j|$ denotes the cardinality of the set $\mathbb{T}_j$. The  amount of constraints of this naive approach may be a computational bottleneck for a large number of constraint functions $n_c$, for large time shift intervals $\Theta$, for large number of agents $N$, and for large sets of effective times $\mathbb{T}_j$.  

In this paper, we instead propose a more efficient method by identifying redundancy in \eqref{eq:cons} and by reducing the total number of constraints. While we obtain a more efficient solution for constraint functions $c$, this benefit will be even more significant for robust control synthesis for STL specifications as presented in the next section. Our reduction is based on the following observation that we state as a proposition.

\begin{mypro}\label{prop:1}\upshape
    The state at time $k$ after time shift $\bar{\kappa}$ is equivalent to the state at time $k'$ after time shift $\bar{\kappa}'$, i.e., $x^{\bar{\kappa}}(k) = x^{\bar{\kappa}'}(k')$, if  $k'+\kappa'_i = k+\kappa_i$ for all $i \in \{1, \dots, N\}$.
 \end{mypro}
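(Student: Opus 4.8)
The plan is to unfold the definitions of the time-shifted stacked state in \eqref{eq:time_shift} component by component and observe that each component depends only on the sum of the time index and the shift of that agent. Concretely, for agent $i$, the $i$-th component of $x^{\bar{\kappa}}(k)$ is by \eqref{eq:time_shift} equal to $x_i(k+\kappa_i)$, and the $i$-th component of $x^{\bar{\kappa}'}(k')$ is $x_i(k'+\kappa'_i)$. Under the hypothesis $k'+\kappa'_i = k+\kappa_i$ for all $i$, these two arguments coincide, so the components are equal; since this holds for every $i \in \{1,\dots,N\}$, the stacked states are equal. This gives the result in the ``nominal'' case where all shifted time arguments lie in $[0,T]$.

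The one subtlety is the piecewise redefinition in \eqref{eq:xi_shift}, which clips the argument below $0$ to $x_i(0)$ and forward-propagates above $T$ under zero input. So after establishing that $k+\kappa_i = k'+\kappa'_i$, I would argue that the clipped value $x_i(k+\kappa_i)$ as defined by \eqref{eq:xi_shift} is a well-defined function of the single integer $k+\kappa_i$ alone: the three cases in \eqref{eq:xi_shift} are determined purely by whether $k+\kappa_i < 0$, $k+\kappa_i \in [0,T]$, or $k+\kappa_i > T$, and within each case the value depends only on $k+\kappa_i$ (in the last case via the deterministic zero-input rollout $f_i(\cdot,\mathbf{0}^{m_i})$ applied $k+\kappa_i - T$ times to $x_i(T)$). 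Hence equal arguments yield equal clipped states, and the component-wise equality still goes through.

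I expect this to be essentially a definition-chasing argument with no real obstacle; the only thing to be careful about is to phrase the claim so that the clipping rule \eqref{eq:xi_shift} is explicitly acknowledged, since otherwise a reader might worry that $x^{\bar{\kappa}}(k)=x^{\bar{\kappa}'}(k')$ could fail at the boundary of the horizon. I would therefore state a short intermediate observation — that $x_i(\ell)$ as modified by \eqref{eq:xi_shift} is a function of $\ell \in \mathbb{Z}$ alone — and then conclude immediately. The whole proof is two or three lines once that observation is in place.
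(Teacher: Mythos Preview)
Your proposal is correct and matches the paper's own reasoning: the paper simply remarks that the result is immediate from the definition \eqref{eq:time_shift}, since the $i$-th component of $x^{\bar\kappa}(k)$ is $x_i(k+\kappa_i)$ and hence depends only on the sum $k+\kappa_i$. Your treatment is in fact more careful than the paper's, as you explicitly verify that the clipping rule \eqref{eq:xi_shift} also defines $x_i(\ell)$ as a function of $\ell$ alone, a point the paper leaves implicit.
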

 

From equation \eqref{eq:time_shift} it is easy to see that this result holds. The intuition here is that the time shift $\kappa_i$ at time $k$ evens out with the time shift $\kappa_i'$ at time $k'$, e.g., for $n=N=1$ we have that $x^{(1)}(0) = x^{(-1)}(2)$ are both pointing to $x(1)$.
Based on this observation, we know that some of the constraints in \eqref{eq:cons} are repeated, and that we can safely remove these repeated constraints.
To capture the set of time shifted states that contain the same state information, we introduce the notion of the \emph{instant-shift pair set} (``instant'' for time instant $k$ and ``shift'' for time shift $\bar{\kappa}$) as follows.

\begin{mydef}[Instant-Shift Pair Set]\upshape
    Given the time shift set $\bar{\mathbf{K}}$, the instant-shift pair set $I(k,\bar{\kappa})$ is the set of all pairs $(k', \bar{\kappa}')\in \mathbb{T} \times \bar{\mathbf{K}}$ that point to the state $x^{\bar{\kappa}}(k)$, i.e.,
     \begin{equation}\label{eq:pair}
      I(k,\bar{\kappa}) \!:= \!\{  (k', \bar{\kappa}') \!\in\! \mathbb{T} \!\times\! \bar{\mathbf{K}}\mid  \!
    	   \forall i \!\in\! \{1, \dots, N\}, k'+\kappa'_i \!=\! k+\kappa_i
    	\},  
     \end{equation} 
     where $\mathbb{T} \in \{\mathbb{T}_j, [0, T_\phi]\}$ is the set of effective times (depending on the choice of the specification).
\end{mydef}

For constraint functions $c_j$, we denote by $\mathbb{P}_j^{s}$ the set of all instant-shift pair sets with respect to $\mathbb{T}_j$, i.e., $\mathbb{P}_j^{s} := \{I(k,\bar{\kappa}) \mid  (k,\bar{\kappa}) \!\in\! \mathbb{T}_j \times \bar{\mathbf{K}} \}$.
Note that each element $d\in\mathbb{P}_j^{s}$ is a set that may consist of more than one pair $(k',\bar{\kappa}')$. For convenience, we define $\mathbb{P}_j := \{d(1) \;|\; d\in \mathbb{P}_j^{s} \}$ which simply selects the first pair $d(1)$ from this set $d$.
In words, $\mathbb{P}_j$ is the set of representative instant-shift pairs selected from $\mathbb{T}_j \!\times\! \bar{\mathbf{K}}$, but it also includes all the ``state information" as in the set $\mathbb{T}_j \!\times\! \bar{\mathbf{K}}$.

Now, instead of enforcing the constraints $c_j$ for all $j \!\in\! \{1, \dots, n_c\}$, $k \!\in\! \mathbb{T}_j$, and $\bar{\kappa}\in\bar{\mathbf{K}}$ as in \eqref{eq:cons}, we can use the representative instant-shift pair set $\mathbb{P}_j$ to solve Problem \ref{prob}. Particularly, we find an optimal control sequence that satisfies the ATR bound $[\theta_1,\theta_2]$ as follows:
\begin{subequations}\label{eq:opt}
	\begin{align}
		& \underset{u^*(0:T-1)}{\text{minimize}} & & J(x(0), u(0\!:\!T-1)) \\
		& \text{subject to} & & \nonumber \\
		& &&
		\!\!\!\!\!\!\!\!\!\!\!\!\!\!\!\!\!\!\!\!\!\!\!
		u(0),u(1),\dots, u(T-1) \in \mathcal{U}, \\
		& &&
		\!\!\!\!\!\!\!\!\!\!\!\!\!\!\!\!\!\!\!\!\!\!\!
		x(k+1) = f(x(k), u(k)), k=0, \dots, T-1, \\
		& &&
		\!\!\!\!\!\!\!\!\!\!\!\!\!\!\!\!\!\!\!\!\!\!\!
	   	c_j(x^{\bar{\kappa}}(k)) \geq 0, \;\; \forall j \!\in\! \{1, \dots, n_c\},       \forall (k,\bar{\kappa})\!\in\! \mathbb{P}_j. \label{eq:fun_cons}
	\end{align} 
\end{subequations} 
 We next summarize the result w.r.t. Problem \ref{prob}.
\begin{mythm}\label{thm:1}\upshape
	The solution $u^*(0:T-1)$ of the optimization problem in \eqref{eq:opt} solves Problem \ref{prob} if and only if a solution to Problem \ref{prob} exists.
\end{mythm}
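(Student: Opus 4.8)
The plan is to prove both directions of the ``if and only if'' by showing that the feasible sets of the naive optimization \eqref{eq:cons} and the reduced optimization \eqref{eq:opt} coincide; once the feasible sets are shown equal, the claim follows immediately because the two problems share the same cost function $J$ and the same dynamics and input constraints, and because Problem~\ref{prob} is feasible exactly when \eqref{eq:cons} is feasible (this is just the definition of $\beta^c(x^{\bar\kappa})=1$ unrolled over all $j$, all $k\in\mathbb{T}_j$, and all $\bar\kappa\in\bar{\mathbf{K}}$).

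The ``only if'' direction is the easy inclusion: $\mathbb{P}_j \subseteq \mathbb{T}_j\times\bar{\mathbf{K}}$ by construction (each $d(1)$ is an element of $\mathbb{T}_j\times\bar{\mathbf{K}}$), so every trajectory feasible for the naive program \eqref{eq:cons} is also feasible for \eqref{eq:opt}; hence if Problem~\ref{prob} has a solution, $u^*$ from \eqref{eq:opt} exists and has cost no larger than the optimum of the naive program. The substance is in the ``if'' direction. Here I would fix a trajectory $x$ (induced by some feasible $u(0{:}T-1)$) that satisfies \eqref{eq:fun_cons}, i.e.\ $c_j(x^{\bar\kappa}(k))\ge 0$ for all $j$ and all $(k,\bar\kappa)\in\mathbb{P}_j$, and show it satisfies the full constraint set \eqref{eq:cons}. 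Take an arbitrary $j$, $k\in\mathbb{T}_j$, $\bar\kappa\in\bar{\mathbf{K}}$. By definition of the instant-shift pair set, $(k,\bar\kappa)$ lies in some equivalence class $d=I(k,\bar\kappa)\in\mathbb{P}_j^s$, and its representative $d(1)=(k',\bar\kappa')\in\mathbb{P}_j$ satisfies $k'+\kappa_i'=k+\kappa_i$ for all $i$. By Proposition~\ref{prop:1} (applied componentwise, using the modified shifted-state definition \eqref{eq:xi_shift}), $x^{\bar\kappa'}(k')=x^{\bar\kappa}(k)$, so $c_j(x^{\bar\kappa}(k))=c_j(x^{\bar\kappa'}(k'))\ge 0$ because $(k',\bar\kappa')\in\mathbb{P}_j$. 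Thus $x$ satisfies \eqref{eq:cons}, hence $\beta^c(x^{\bar\kappa})=1$ for all $\bar\kappa\in\bar{\mathbf{K}}$, so $u(0{:}T-1)$ is feasible for Problem~\ref{prob}; applying this to the optimizer $u^*$ of \eqref{eq:opt} gives a solution to Problem~\ref{prob}, and a cost-comparison argument (both programs optimize the same $J$ over the same feasible set) shows it is in fact optimal.

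The one point requiring a little care — and the step I expect to be the main obstacle — is verifying that Proposition~\ref{prop:1} really does apply under the truncated/clamped shift definition \eqref{eq:xi_shift} rather than the clean formula \eqref{eq:time_shift}: one must check that when $k+\kappa_i<0$ or $k+\kappa_i>T$, the clamped value $x_i(k+\kappa_i)$ depends only on the quantity $k+\kappa_i$ (it equals $x_i(0)$ in the first case and the zero-input forward propagation $f_i(x_i(k+\kappa_i-1),\mathbf 0)$ in the second, both functions of $k+\kappa_i$ alone), so that the equality $k'+\kappa_i'=k+\kappa_i$ still forces $x_i(k'+\kappa_i')=x_i(k+\kappa_i)$ in every regime. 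I would state this explicitly as the justification that Proposition~\ref{prop:1} holds verbatim for \eqref{eq:xi_shift}. I would also note for completeness that the subtlety of $\mathbb{P}_j$ selecting only \emph{one} representative per class is exactly what makes the reduction lossless: the map $(k,\bar\kappa)\mapsto I(k,\bar\kappa)$ partitions $\mathbb{T}_j\times\bar{\mathbf{K}}$, so enforcing the constraint on one representative per block is equivalent to enforcing it on the whole block. With these observations in place the proof is a short two-direction argument plus a cost comparison.
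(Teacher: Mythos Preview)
Your proposal is correct and follows essentially the same route as the paper: both arguments reduce to observing, via Proposition~\ref{prop:1} and the definition of the instant-shift pair set, that $c_j(x^{\bar\kappa}(k))\ge 0$ holds for one pair in $I(k,\bar\kappa)$ if and only if it holds for every pair in that class, so the constraint sets of \eqref{eq:cons} and \eqref{eq:fun_cons} coincide. Your write-up is simply more explicit (spelling out both inclusions, the partition structure, and the check that the clamped definition \eqref{eq:xi_shift} depends only on $k+\kappa_i$), whereas the paper compresses all of this into a single sentence invoking Proposition~\ref{prop:1} ``by construction.''
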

\begin{proof}
    Note that the main difference between Problem \ref{prob} and the optimization problem in \eqref{eq:opt} is in constraint \eqref{eq:fun_cons}. By the observation in Proposition \ref{prop:1} and the definition of the instant-shift pair sets, we know that $c_j(x^{\bar{\kappa}}(k)) \geq 0$  if and only if  $c_j(x^{\bar{\kappa}'}(k')) \geq 0$ for all $(k',\bar{\kappa}')\in I(k,\bar{\kappa})$ so that  the result follows by construction.
\end{proof}

\textbf{Computational complexity of \eqref{eq:opt}.}  The optimization problem in \eqref{eq:opt}  is convex if the input domain $\mathcal{U}$ is convex and if the system dynamics $f$, the cost $J$, and all the negated constraint $-c_j$ are convex functions. In general, however, the optimization problem can be non-convex, e.g., when $c_j$ encodes collision avoidance constraints in which case the reduction of constraints in \eqref{eq:fun_cons} is beneficial. In fact, the main complexity of the optimization problem \eqref{eq:opt} largely depends on the number of constraints in \eqref{eq:fun_cons}. Let us analyze the worst case when $\mathbb{T}_j=[0,T]$ in which case there are 
\begin{align}\label{eq:complexity}
    \Sigma_{j=1}^{n_c}|\mathbb{P}_j| = n_c\Big( (T+1)\Theta^N-T(\Theta-1)^N\Big)
\end{align}
constraints which is strictly less than $n_c (T+1)\Theta^N$ as in \eqref{eq:cons}. In fact, our encoding uses $n_c T (\Theta-1)^N$ less constraints. In practice, note that usually $T$ is much larger than the length of the ATR bound $\Theta$ and the number of agents $N$.
The result in equation \eqref{eq:complexity} can be derived as follows. 
The cardinality of the set $\mathbb{P}_j$ is  equivalent to the cardinality of set $\mathbb{P}_j^s$.
Assume that we construct $I(k,\bar{\kappa})$ in $\mathbb{P}_j^s$ recursively starting from $k=0$ (until $T$). At instant $k=0$, there will be $\Theta^N$ sets since there is no redundancy, i.e., for all $\bar{\kappa} \in \bar{\mathbf{K}}$ no two pairs $(0, \bar{\kappa})$ will be in the same instant-shift pair set. 
 At instant $k, k\!\in\! [1,T]$, $\forall \kappa_1, \dots, \kappa_N \in [\theta_1, \theta_2-1]$, the pair $(k, \bar{\kappa})$ is contained already within the sets constructed in the previous instants, i.e., only $\Theta^N-(\Theta-1)^N$ new sets have to  be constructed. We can repeat this argument and find that there are in total $\Theta^N + T(\Theta^N-(\Theta-1)^N)$ instant-shift pair sets in $\mathbb{P}_j^s$ (instant-shift pairs in $\mathbb{P}_j$) as in equation \eqref{eq:complexity}.

\section{Temporally Robust Control for  STL Specifications}\label{sec:sol_stl}

Solving Problem \ref{prob:stl} is inherently more challenging than solving Problem \ref{prob} due to the complexity that is added by considering STL tasks. We will rely on the Mixed Integer Linear Program (MILP) formulation proposed and used in \cite{raman2014model,raman2015reactive, sadraddini2018formal, yu2022model} to encode STL tasks\footnote{It is worth mentioning that other sound and complete MILP encoding frameworks for STL, e.g., \cite{kurtz2022mixed}, can similarly be used.}, and again incorporate the idea of instant-shift pair sets to leverage redundancy in time shifts  to enforce the ATR bound $[\theta_1,\theta_2]$. 

\textbf{Temporally robust STL encoding in an MIP.} Let us present the final result upfront. 
Problem \ref{prob:stl} can be solved by  the following optimization problem 
\begin{subequations}\label{eq:opt-stl}
	\begin{align}
		& \underset{u^*(0:T-1)}{\text{minimize}} & & J(x(0), u(0\!:\!T-1)) \\
		& \text{subject to} & & \nonumber \\
		& &&
		\!\!\!\!\!\!\!\!\!\!\!\!\!\!\!\!\!\!\!\!\!\!\!
		u(0),u(1),\dots, u(T-1) \in \mathcal{U}, \\
		& &&
		\!\!\!\!\!\!\!\!\!\!\!\!\!\!\!\!\!\!\!\!\!\!\!
		x(k+1) = f(x(k), u(k)), k=0, \dots, T-1, \\
		& &&
		\!\!\!\!\!\!\!\!\!\!\!\!\!\!\!\!\!\!\!\!\!\!\!
		z_{I(0, \bar{\kappa})}^{\phi} = 1, \ \forall \bar{\kappa} \in \bar{\mathbf{K}} \label{fundd},
	\end{align} 
\end{subequations} 
where  $z_{I(0, \bar{\kappa})}^{\phi}$ in equation \eqref{fundd}  are binary variables (formally presented in the remainder) that encode whether or not the formula $\phi$ is satisfied by the signal $x^{\bar{\kappa}}$. We remark that $z_{I(0, \bar{\kappa})}^{\phi}=1$ will correspond to satisfaction of $\phi$, i.e., to $\beta^\phi(x^{\bar{\kappa}})=1$. We further note that setting $k=0$ in $I(0, \bar{\kappa})$ indicates that we are imposing the specification $\phi$ at time zero. By considering all time shifts $\bar{\kappa} \in \bar{\mathbf{K}}$ in equation \eqref{fundd}, we then aim to robustly enforce satisfaction of $\phi$ for all trajectories shifted by $\bar{\kappa}$.
The encoding of the binary variables $z_{I(0, \bar{\kappa})}^{\phi}$ is introduced next.

 The general idea in \cite{raman2014model} is to recursively encode the STL formula $\phi$ starting from predicates $\pi^\mu$, and then encode the remaining Boolean and temporal operators. We additionally need to capture the time shift  $x^{\bar{\kappa}}$ in this encoding, which we next describe step-by-step.
In the case of STL tasks, we denote by $\mathbb{P}^s$ the
set of all instant-shift pair sets in terms of $[0,T_\phi]$, i.e.,  $\mathbb{P}^{s} := \{I(k,\bar{\kappa}) \mid  (k,\bar{\kappa}) \!\in\! [0,T_\phi]\times \bar{\mathbf{K}} \}$.

 \emph{Predicates:}
	For each predicate $\pi^\mu$ and for each instant-shift pair set $I(k, \bar{\kappa}) \in \mathbb{P}^s$, we introduce binary variables $z_{I(k, \bar{\kappa})}^{\mu} \!\in\! \{0,1\}$. 
    The following constraints enforce that $z_{I(k, \bar{\kappa})}^{\mu} = 1$ if and only if $\mu(x^{\bar{\kappa}}(k)) \geq 0$:
	\begin{align}
		\mu(x^{\bar{\kappa}}(k)) & \leq M z_{I(k, \bar{\kappa})}^{\mu} - \epsilon, \nonumber \\
		- \mu(x^{\bar{\kappa}}(k)) & \leq M (1-z_{I(k, \bar{\kappa})}^{\mu}) -\epsilon, \nonumber 
	\end{align}
	where $(k,\bar{\kappa})$ in $x^{\bar{\kappa}}(k)$ can  again be any element from the set $I(k,\bar{\kappa})$, and $M$ and $\epsilon$ are sufficiently large and small positive constants, respectively, see \cite{bemporad1999control} for details. We note that
	$z_{I(k, \bar{\kappa})}^{\mu} = 1$ implies that the predicate $\pi^\mu$ is satisfied at instant $k'$ after time shift $\bar{\kappa}'$ for all $(k', \bar{\kappa}') \!\in\! I(k, \bar{\kappa})$ due to Proposition \ref{prop:1} and the definition of $I(k, \bar{\kappa})$.  Otherwise, i.e., $z_{I(k, \bar{\kappa})}^{\mu} = 0$, the predicate $\pi^\mu$ is violated for the corresponding pairs. We remark that the use of instant-shift pair sets reduces the amount of binary variables significantly compared with considering all $(k,\bar{\kappa}) \!\in\! [0, T_\phi] \times \bar{\mathbf{K}}$. We discuss the overall computational complexity at the end of this section.

	\begin{remark}
		With some abuse of notation, we may write $z_{I(k', \bar{\kappa}')}^{\mu}$ instead of $z_{I(k, \bar{\kappa})}^{\mu}$ in the remainder if $(k', \bar{\kappa}') \!\in\! I(k, \bar{\kappa})$ since they both encode the same predicate.
		In other words, $z_{I(k', \bar{\kappa}')}^{\mu}$ and $z_{I(k, \bar{\kappa})}^{\mu}$ point to the same variable.
	\end{remark}

	\emph{Boolean operators:}
	Now consider that the formula $\phi$ consists of Boolean operators over formulae $\phi_i$ that are associated with variables $z_{I(k, \bar{\kappa})}^{\phi_i}$ whose value is 1 if and only if $\phi_i$ is satisfied by $x$ at time $k$ after time shift $\bar{\kappa}$. Equivalently, we can say that $\phi_i$ is satisfied by $x$ at time $k'$ after time shift $\bar{\kappa}'$ for all pairs $(k', \bar{\kappa}') \!\in\! I(k, \bar{\kappa})$. We introduce binary variables $z_{I(k, \bar{\kappa})}^{\phi} \!\in\! \{0,1\}$ of $\phi$ for a instant-shift pair set $I(k, \bar{\kappa}) \in \mathbb{P}^s$, and define the following set of constraints for different Boolean operators: 
	\begin{itemize}
		\item Negation $\phi=\neg \phi_1$: $z_{I(k, \bar{\kappa})}^{\phi} = 1-z_{I(k, \bar{\kappa})}^{\phi_1}$,
		\item Conjunction $\phi=\wedge_{i=1}^m \phi_i$: 
		\begin{align}
			& z_{I(k, \bar{\kappa})}^{\phi} \leq z_{I(k, \bar{\kappa})}^{\phi_i}, i=1,\dots, m, \nonumber \\
			& z_{I(k, \bar{\kappa})}^{\phi} \geq 1-m+\Sigma_{i=1}^m z_{I(k, \bar{\kappa})}^{\phi_i}, \nonumber
		\end{align}
		\item Disjunction $\phi=\vee_{i=1}^m \phi_i$: 
		\begin{align}
			& z_{I(k, \bar{\kappa})}^{\phi} \geq z_{I(k, \bar{\kappa})}^{\phi_i}, i=1,\dots, m, \nonumber \\
			& z_{I(k, \bar{\kappa})}^{\phi} \leq \Sigma_{i=1}^m z_{I(k, \bar{\kappa})}^{\phi_i}. \nonumber
		\end{align}
	\end{itemize}
	These constraints again enforce that $z_{I(k, \bar{\kappa})}^{\phi} =1$ if and only if $\phi$ is satisfied at time $k$ after time shift $\bar{\kappa}$.
 
	 \emph{Temporal operators:} The encoding of temporal operators follows the same idea by introducing binary variables $z_{I(k, \bar{\kappa})}^{\phi} \!\in\! \{0,1\}$ for a instant-shift pair set $I(k, \bar{\kappa}) \in \mathbb{P}^s$ as:
	\begin{itemize}
		\item Always $\phi=\mathbf{G}_{[a,b]}\phi_1$:
		\begin{align}
			z_{I(k, \bar{\kappa})}^{\phi} = \bigwedge_{k'=k+a}^{k+b} z_{I(k', \bar{\kappa})}^{\phi_1}, \nonumber
		\end{align}
		\item Eventually $\phi=\mathbf{F}_{[a,b]}\phi_1$:
		\begin{align}
			z_{I(k, \bar{\kappa})}^{\phi} = \bigvee_{k'=k+a}^{k+b} z_{I(k', \bar{\kappa})}^{\phi_1}, \nonumber
		\end{align}
		\item Until $\phi=\phi_1\mathbf{U}_{[a,b]}\phi_2$:
		\begin{align}
			z_{I(k, \bar{\kappa})}^{\phi} = \bigvee_{k'=k+a}^{k+b} (z_{I(k', \bar{\kappa})}^{\phi_2} \wedge \bigwedge_{k''=k}^{k'} z_{I(k'', \bar{\kappa})}^{\phi_1}).\nonumber
		\end{align}
	\end{itemize}

\textbf{Soundness and completeness of \eqref{eq:opt-stl}.} By the recursive construction of $z_{I(0, \bar{\kappa})}^{\phi}$ via an MIP encoding and by induction over the structure of the STL formula $\phi$, we know that $\phi$ is satisfied by $x^{\bar{\kappa}}$ if and only if we enforce that $z_{I(0, \bar{\kappa})}^{\phi}=1$. Note again that we set $k=0$ to impose that the STL specification $\phi$ is satisfied at time $k=0$ \cite{raman2014model}. 
By enforcing $z_{I(0, \bar{\kappa})}^{\phi}=1$ for all $\bar{\kappa} \!\in\! \bar{\mathbf{K}}$ as in equation \eqref{fundd}, we indeed achieve the temporally robust satisfaction of $\phi$.

We next summarize the result w.r.t. Problem \ref{prob:stl}.

\begin{mythm}\upshape
The solution $u^*(0:T-1)$ of the optimization problem in \eqref{eq:opt-stl} solves Problem \ref{prob:stl} if and only if Problem \ref{prob:stl} has a solution.
\end{mythm}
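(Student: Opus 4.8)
The plan is to show that the optimization problem in \eqref{eq:opt-stl} and Problem \ref{prob:stl} have the same feasible set and the same objective, so that an optimizer of one is an optimizer of the other precisely when either is feasible; this is the STL analogue of the argument behind Theorem \ref{thm:1}. The objective $J$, the input constraints $u(k)\in\mathcal{U}$, and the dynamics $x(k+1)=f(x(k),u(k))$ appear identically in both formulations, so the whole argument reduces to showing that the constraint \eqref{fundd}, i.e., $z_{I(0,\bar{\kappa})}^{\phi}=1$ for all $\bar{\kappa}\in\bar{\mathbf{K}}$, is equivalent to the last constraint of Problem \ref{prob:stl}, i.e., $\beta^\phi(x^{\bar{\kappa}})=1$ for all $\kappa_1,\dots,\kappa_N\in[\theta_1,\theta_2]$.

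The core lemma to establish is the following: for every instant-shift pair set $I(k,\bar{\kappa})\in\mathbb{P}^s$ and every subformula $\psi$ of $\phi$, the binary variable $z_{I(k,\bar{\kappa})}^{\psi}$ introduced by the encoding satisfies $z_{I(k,\bar{\kappa})}^{\psi}=1$ if and only if $x^{\bar{\kappa}}$ satisfies $\psi$ at time $k$ (equivalently, if and only if $x^{\bar{\kappa}'}$ satisfies $\psi$ at time $k'$ for every $(k',\bar{\kappa}')\in I(k,\bar{\kappa})$). I would prove this by induction over the structure of $\psi$, mirroring the classical soundness and completeness proof of the big-$M$ MILP encoding in \cite{raman2014model}, with two modifications. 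First, in the base case of predicates $\pi^\mu$, the big-$M$ inequalities are written in terms of $\mu(x^{\bar{\kappa}}(k))$ rather than $\mu(x(k))$; well-definedness of this quantity for the shifted signal is guaranteed by the clamping and forward-propagation conventions in \eqref{eq:xi_shift}, and the equivalence over the entire pair set follows from Proposition \ref{prop:1} together with the definition of $I(k,\bar{\kappa})$, which forces all $(k',\bar{\kappa}')\in I(k,\bar{\kappa})$ to point to the same state. Second, in the inductive step for the temporal operators $\mathbf{G}_{[a,b]}$, $\mathbf{F}_{[a,b]}$, and $\mathbf{U}_{[a,b]}$, the recursion recurses on the shifted instant $k'\in[k+a,k+b]$ while keeping the shift $\bar{\kappa}$ fixed, and one checks that each such $I(k',\bar{\kappa})$ is indeed an element of $\mathbb{P}^s$ (since $k'\in[0,T_\phi]$ over the relevant range because $\phi$ is bounded with horizon $T_\phi$); the Boolean cases are unchanged from \cite{raman2014model}. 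Instantiating the lemma at $\psi=\phi$ and $k=0$ yields $z_{I(0,\bar{\kappa})}^{\phi}=1\iff\beta^\phi(x^{\bar{\kappa}})=1$, which is exactly the soundness and completeness claim stated before the theorem.

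Given the lemma, the proof concludes as follows. Letting $\bar{\kappa}$ range over all of $\bar{\mathbf{K}}$, the conjunction of the constraints \eqref{fundd} is equivalent to $\beta^\phi(x^{\bar{\kappa}})=1$ for all $\kappa_1,\dots,\kappa_N\in[\theta_1,\theta_2]$. Hence a control sequence, together with its induced trajectory and the binary variables forced by the encoding, is feasible for \eqref{eq:opt-stl} if and only if it is feasible for Problem \ref{prob:stl}; since the objectives coincide, the two problems share the same optimal value and the same set of optimizers. In particular, \eqref{eq:opt-stl} returns an optimal $u^*(0:T-1)$ if and only if Problem \ref{prob:stl} admits a solution, and that $u^*(0:T-1)$ then solves Problem \ref{prob:stl}. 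The main obstacle is the inductive lemma itself — specifically, verifying that the instant-shift pair set indexing stays internally consistent through nested temporal operators (so that the shared variable $z_{I(k,\bar{\kappa})}$ legitimately represents every equivalent pair), and that the boundary conventions in \eqref{eq:xi_shift} make $\beta^\phi(x^{\bar{\kappa}})$ depend only on the finitely many state values the encoding actually constrains; everything else is bookkeeping inherited from the standard STL–MILP equivalence.
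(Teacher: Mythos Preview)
Your proposal is correct and follows essentially the same route as the paper: reduce to the equivalence of \eqref{fundd} with $\beta^\phi(x^{\bar\kappa})=1$ for all $\bar\kappa\in\bar{\mathbf{K}}$, obtain that equivalence by structural induction over $\phi$ using Proposition~\ref{prop:1} and the instant-shift pair sets for the base case, and appeal to the exactness of the big-$M$ MILP encoding of \cite{raman2014model} for the inductive steps. The paper's proof is terser---it leans on the paragraph preceding the theorem where the inductive equivalence $z_{I(0,\bar\kappa)}^{\phi}=1\iff\beta^\phi(x^{\bar\kappa})=1$ is already asserted, and then simply invokes the known soundness and completeness of \cite{raman2014model} for suitably chosen $M,\epsilon$---whereas you spell out the induction and the consistency checks on $I(k',\bar\kappa)\in\mathbb{P}^s$ more explicitly, but the underlying argument is the same.
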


\begin{proof}
Similar to Theorem \ref{thm:1}, the main difference between Problem \ref{prob:stl} and the optimization problem  \eqref{eq:opt-stl} is in constraint \eqref{fundd}. By the observation in Proposition \ref{prop:1} and the definition of the instant-shift pair sets, it follows that $\beta^\phi(x^{\bar{\kappa}})=1$ for each $\bar{\kappa}\in [\theta_1,\theta_2]^N$. The MILP encoding from \cite{raman2014model} is known to be exact for appropriately selected $M$ and $\epsilon$ (as we assume),  so that our encoding for the ATR bound $[\theta_1,\theta_2]$ is necessary and sufficient as we consider all possible time shifts by the instant-shift pair set.
\end{proof}
	
	


\begin{figure*}[htbp]
	\centering
	
	\subfloat{
        \rotatebox{90}{\scriptsize{ Constraint Function Task}}
		\begin{minipage}[t]{0.23\linewidth}
			\centering
			\includegraphics[width=1\linewidth]{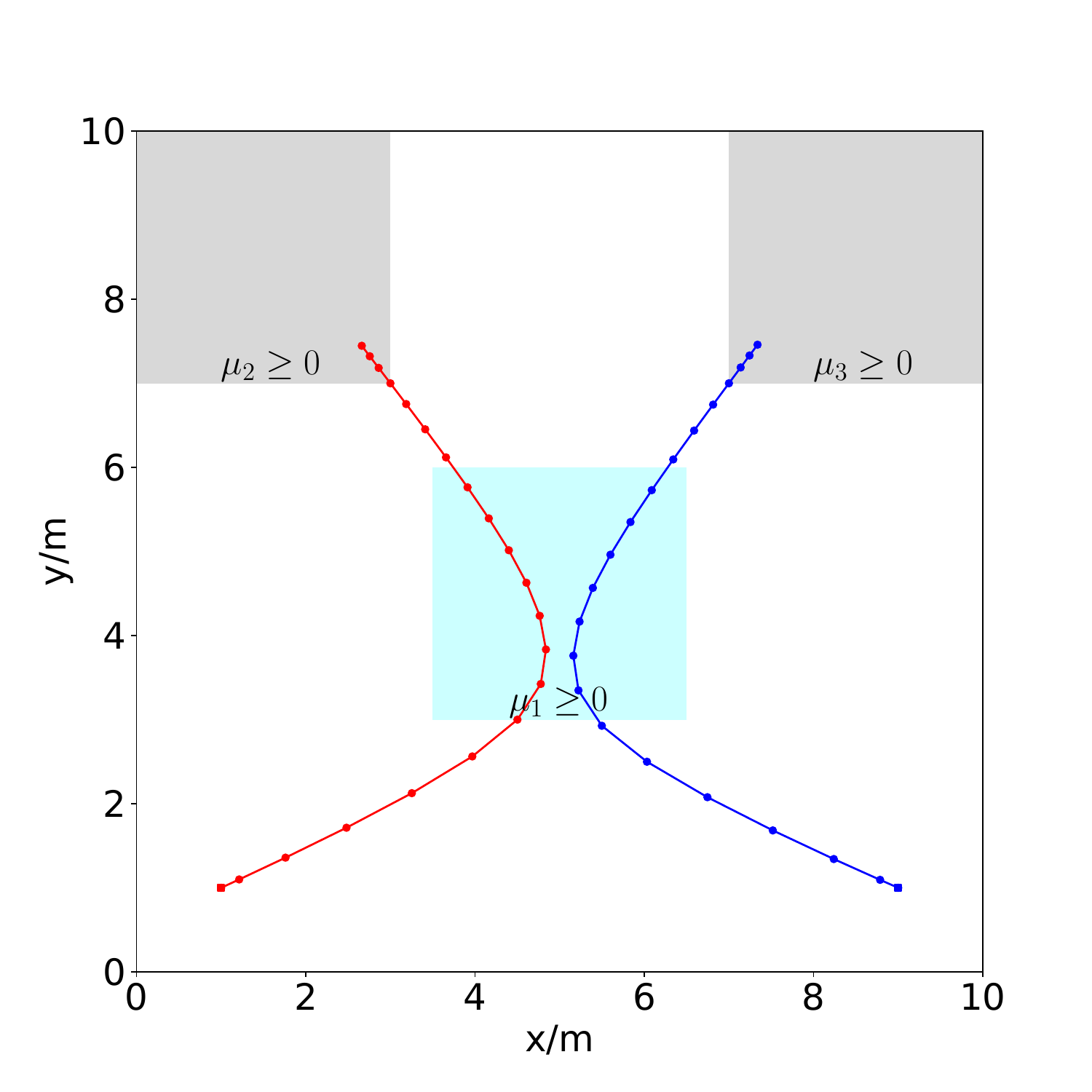}
		\end{minipage}
	}
	\subfloat{
		\begin{minipage}[t]{0.23\linewidth}
			\centering
			\includegraphics[width=1\linewidth]{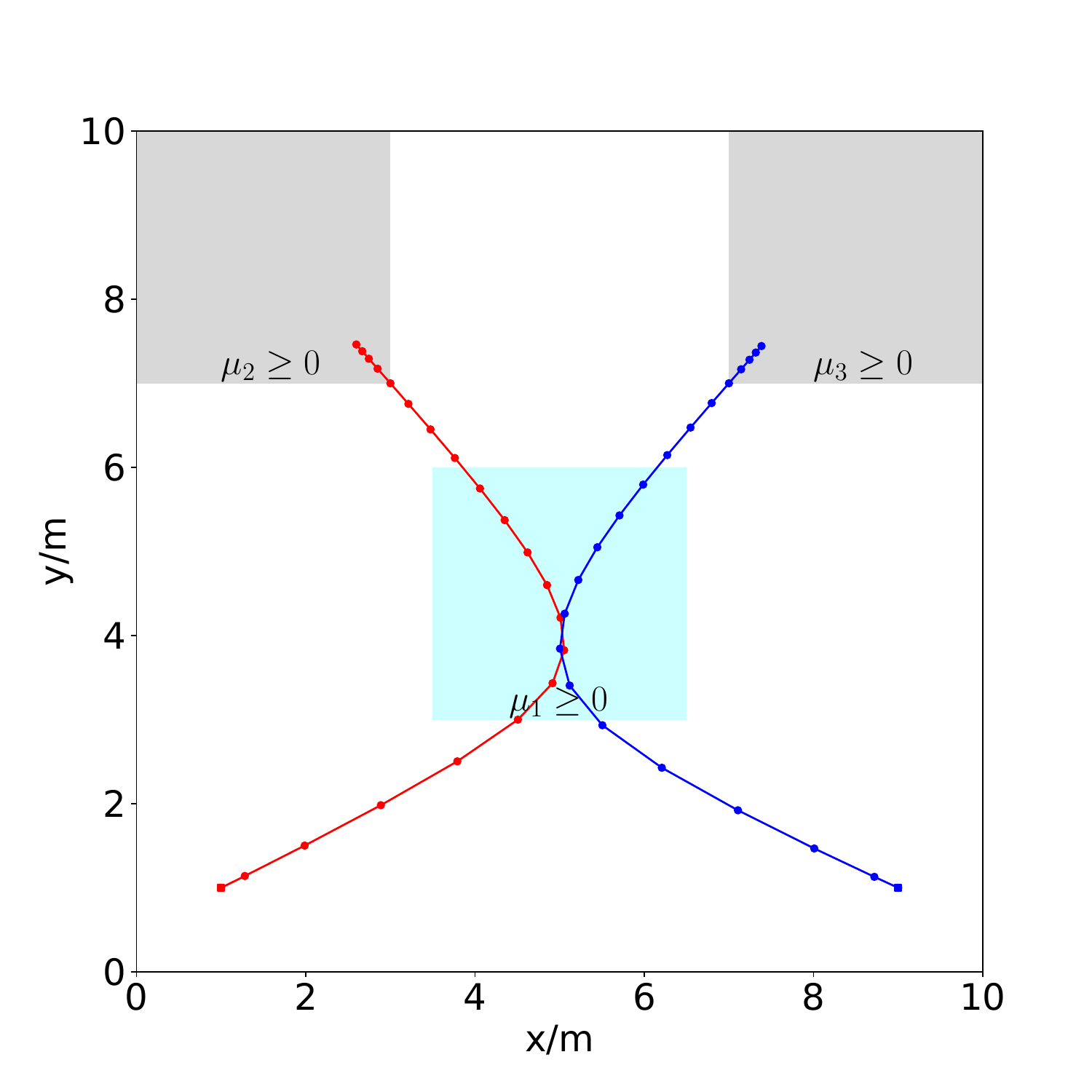}
		\end{minipage}
	}
	\subfloat{
		\begin{minipage}[t]{0.23\linewidth}
			\centering
			\includegraphics[width=1\linewidth]{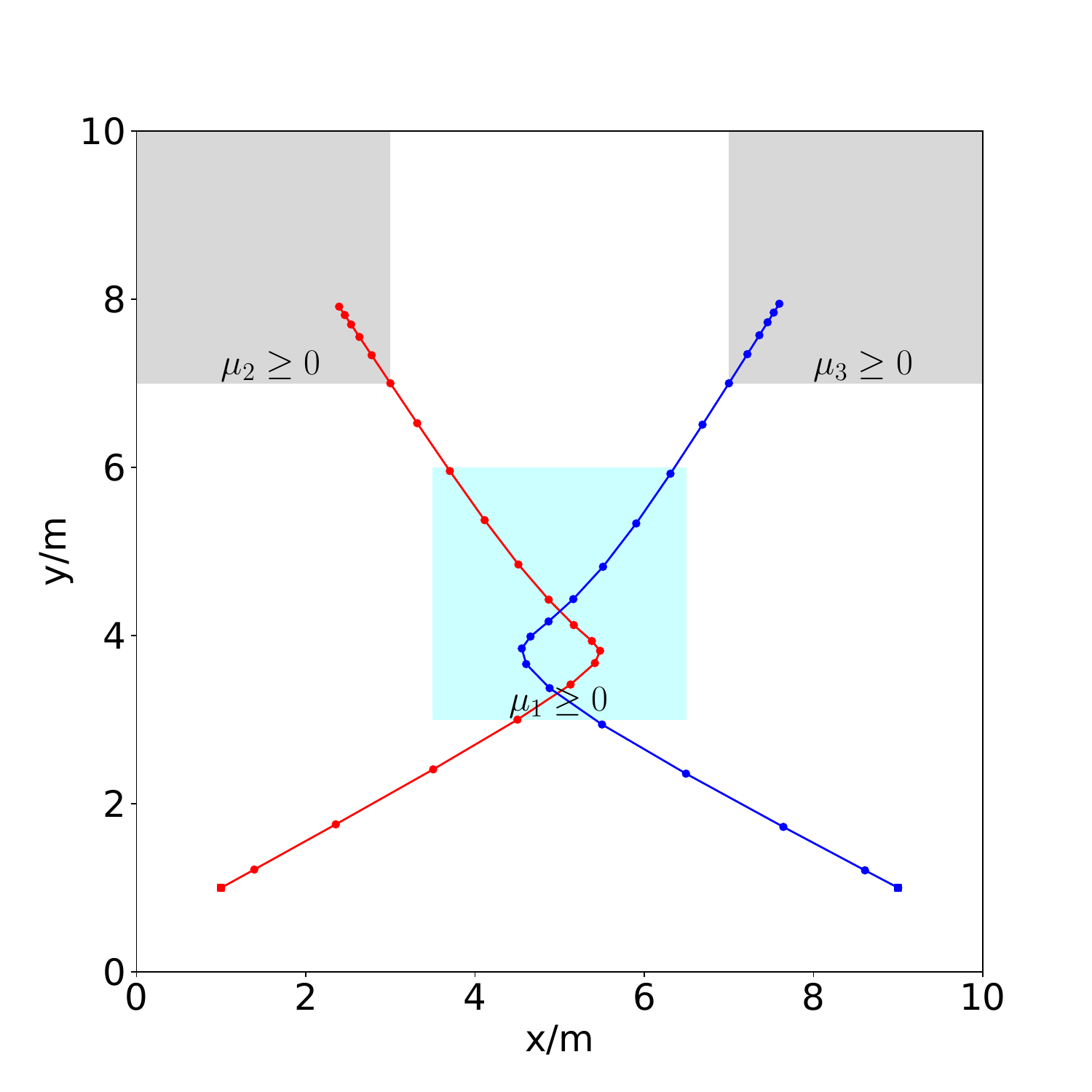}
		\end{minipage}
		\begin{minipage}[t]{0.23\linewidth}
			\centering
			\includegraphics[width=1\linewidth]{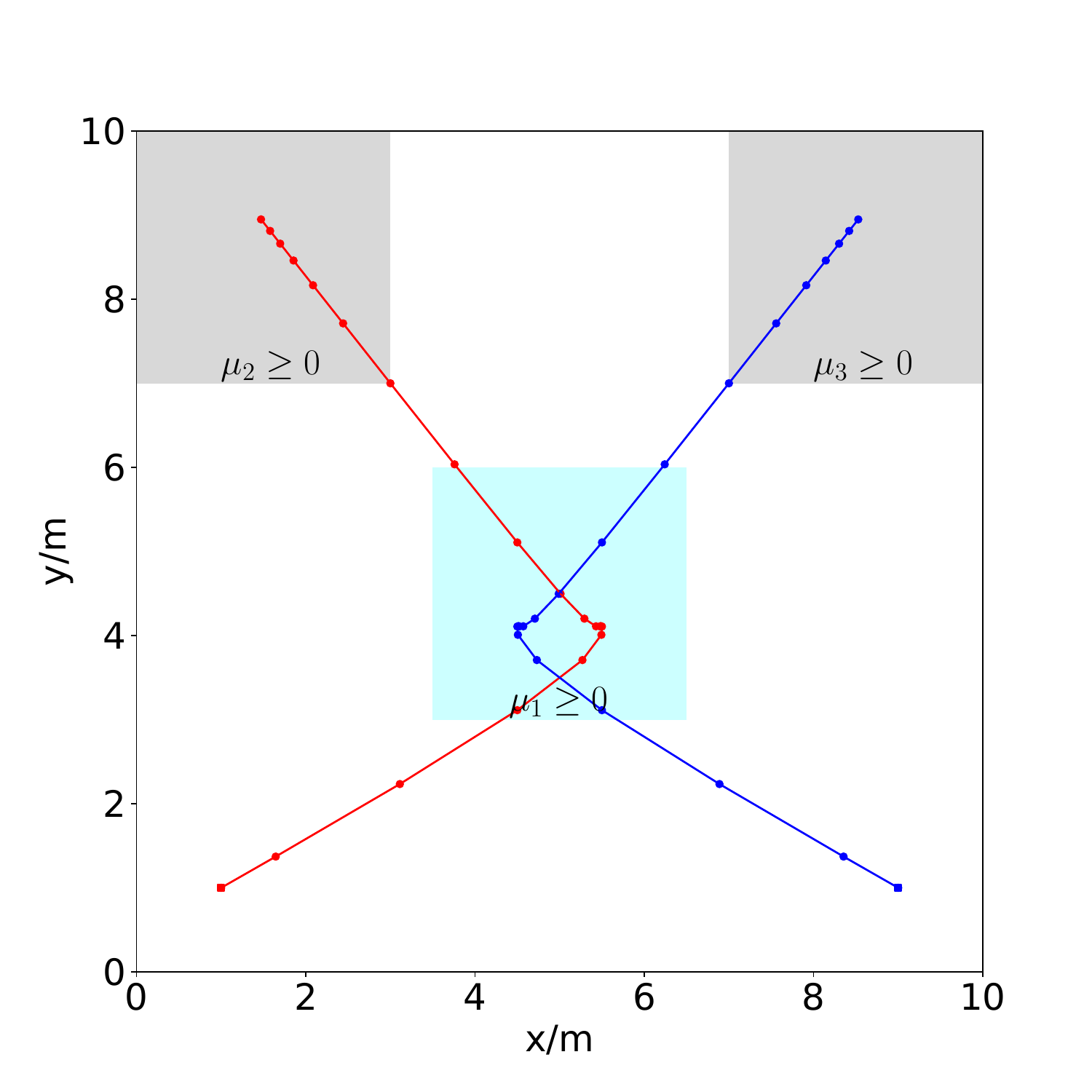}
		\end{minipage}
	} \\
	
	\vspace{-3pt}
	\setcounter{subfigure}{0}

    \subfloat[$\theta_1=0, \theta_2=0$]{
		\rotatebox{90}{\scriptsize{~~~~~~~~~~~~STL Task}}
		\begin{minipage}[t]{0.23\linewidth}
			\centering
			\includegraphics[width=1\linewidth]{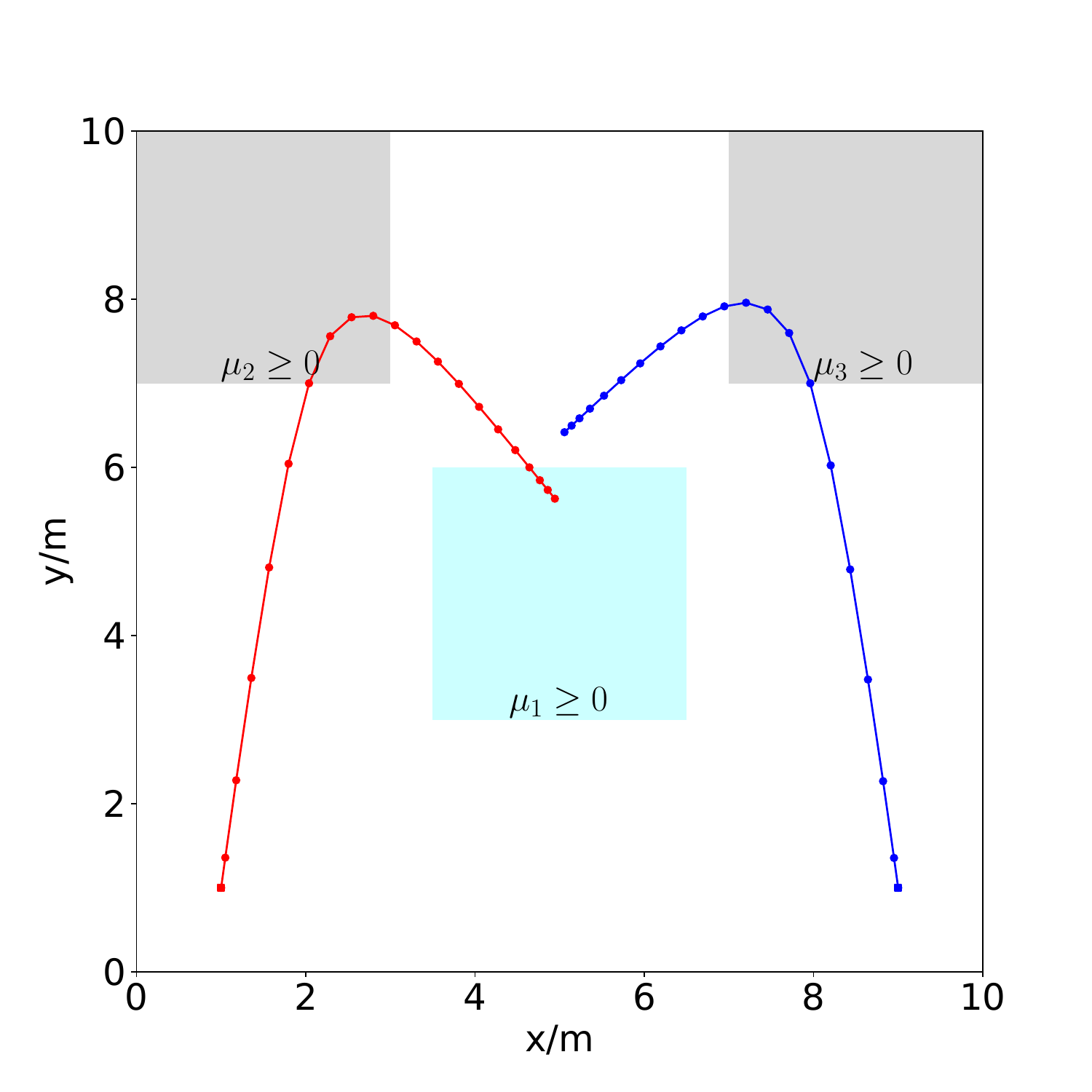}
		\end{minipage}
	}
	\subfloat[$\theta_1=-1, \theta_2=1$]{
		\begin{minipage}[t]{0.23\linewidth}
			\centering
			\includegraphics[width=1\linewidth]{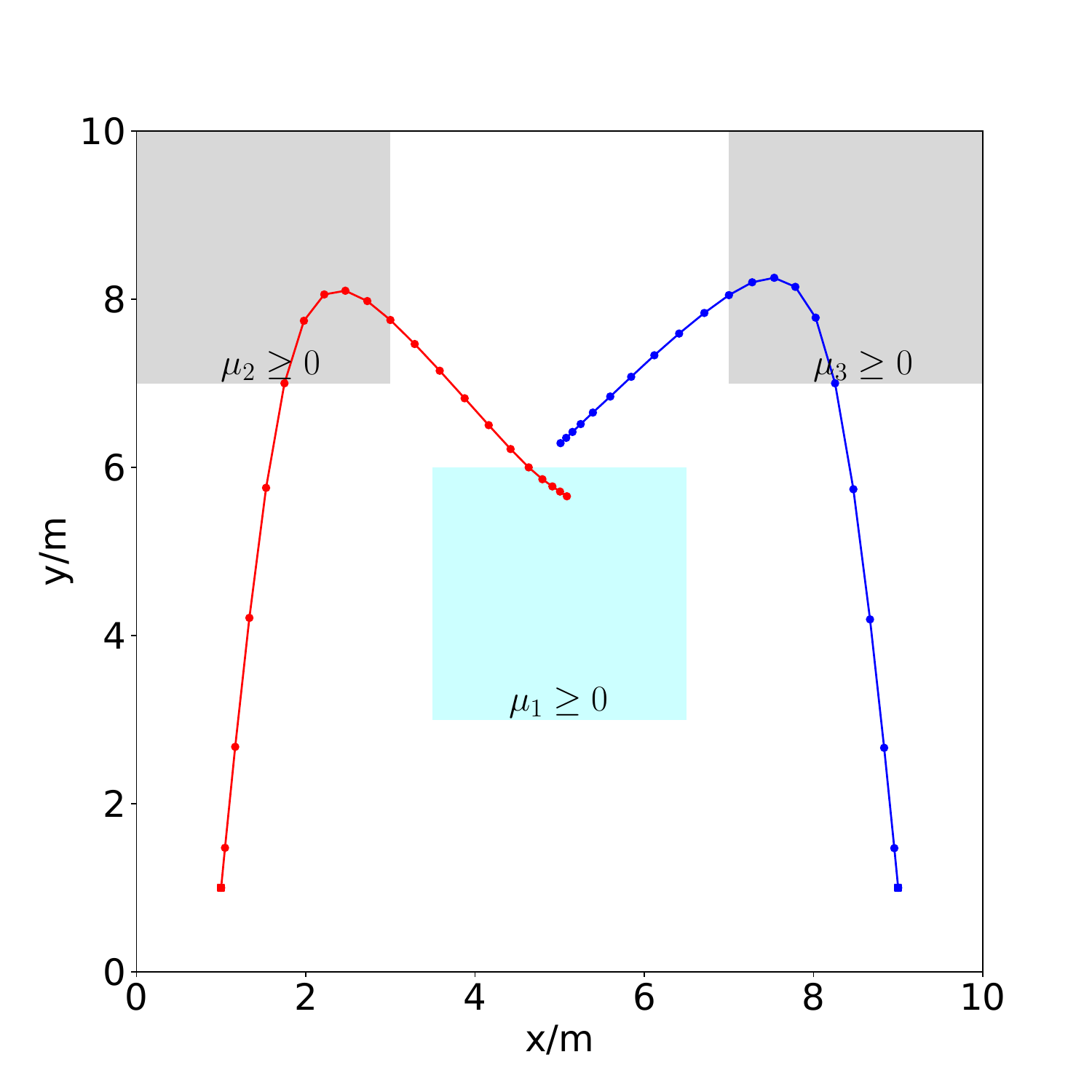}
		\end{minipage}
	}
	\subfloat[$\theta_1=-2, \theta_2=2$]{
		\begin{minipage}[t]{0.23\linewidth}
			\centering
			\includegraphics[width=1\linewidth]{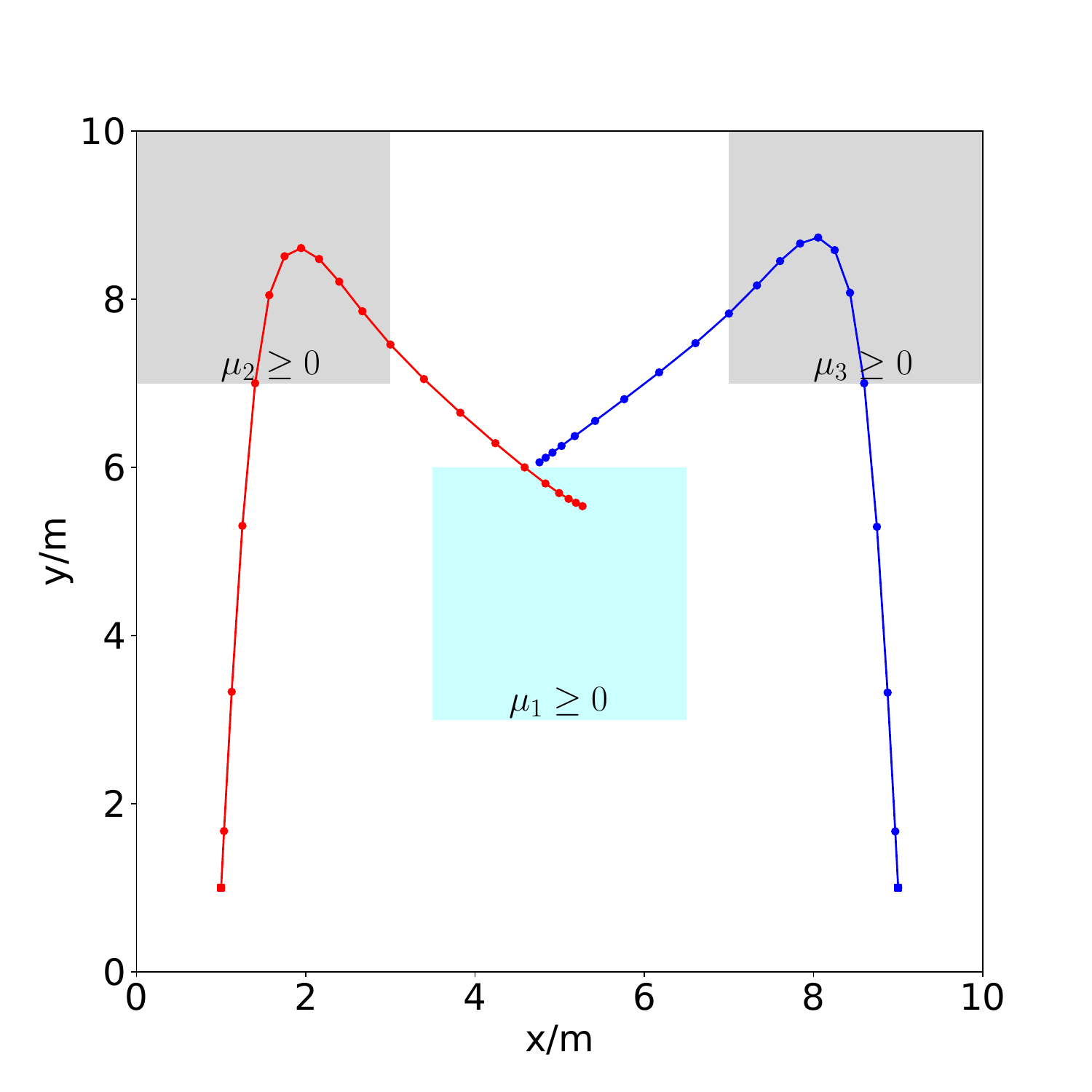}
		\end{minipage}
	}
        \subfloat[$\theta_1=-3, \theta_2=3$]{
		\begin{minipage}[t]{0.23\linewidth}
			\centering
			\includegraphics[width=1\linewidth]{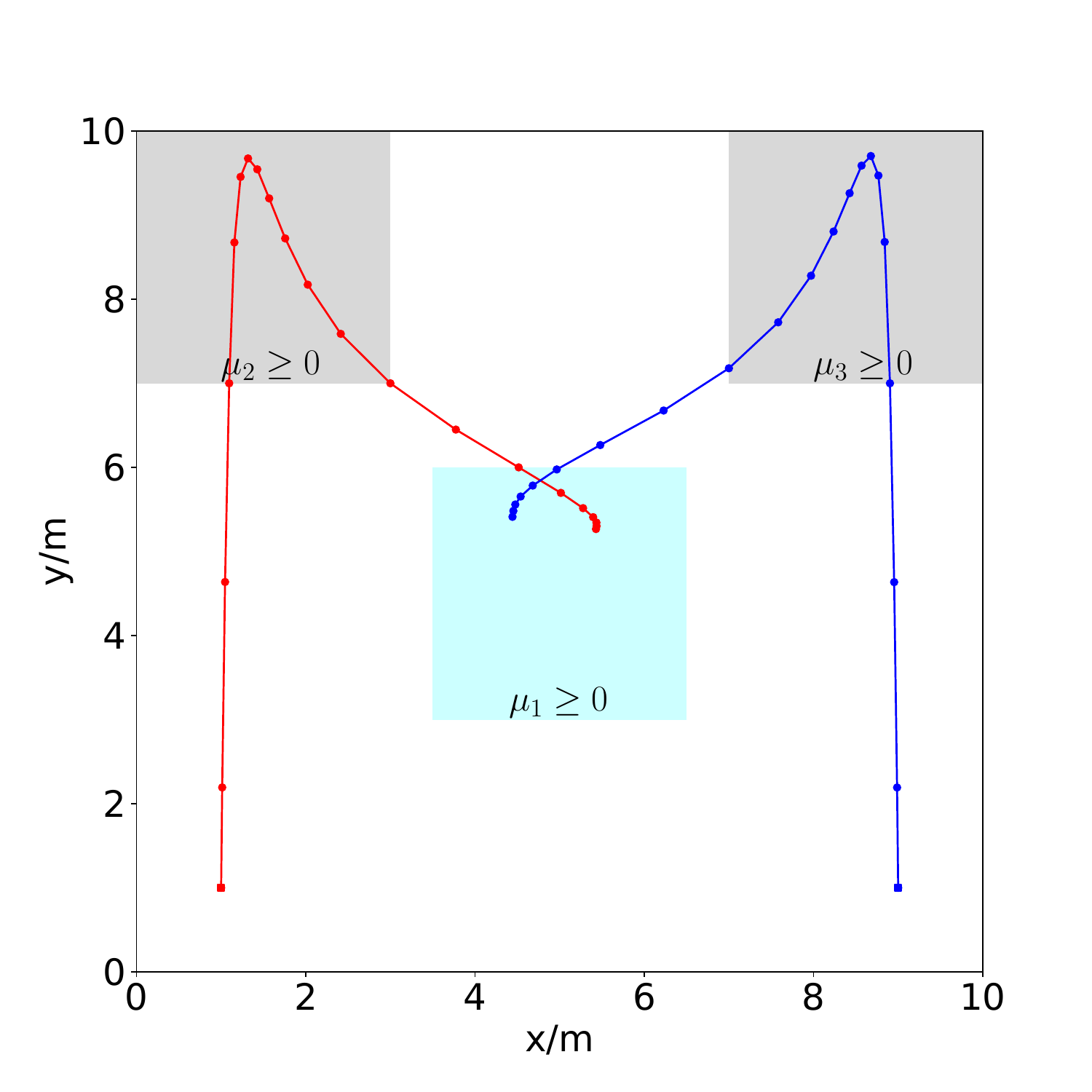}
		\end{minipage}
	}
	\caption{Simulation results of the motion planning case in Section \ref{subsec:case2}.}
	\label{fig:sim}
\end{figure*}


\textbf{Computational complexity of \eqref{eq:opt-stl}.}
	The complexity of the optimization problem in \eqref{eq:opt-stl} is in general that of a Mixed-Integer Programming (MIP) which is NP-hard. We can  use the number of binary variables to approximate the computational complexity since, in the case of convex cost function $J$ and dynamics $f$, the complexity mainly depends on the number of binary variables.
    Here we analyze particularly the number of binary variables introduced for predicates. The total number of binary variables then recursively depends on the number of Boolean and temporal operators in the STL formula $\phi$, see e.g., \cite{raman2014model} for a discussion.
    In the worst case, similar to the case of  constraint functions $c$ in the last section, the above encoding  will introduce 
    \begin{align}
        n_{\pi}\Big((T_\phi+1)\Theta^N - T_\phi(\Theta-1)^N\Big) \nonumber
    \end{align}
	binary variables for predicates based on the instant-shift pair set $I(k,\bar{\kappa})$, which is much less than the straightforward combinatorial method with $n_{\pi}(T_\phi+1) \Theta^N$ binary variables, where $n_{\pi}$ is the number of predicates in the formula. 

\section{Case Study}\label{sec:case}

In this section, we illustrate our efficient method on two illustrative case studies. 
All simulations are conducted in \textsf{Python 3}  and we use \textsf{Gurobi} \cite{optimization2018gurobi} to solve the optimization problem. 
Simulations are carried out on a laptop computer with i7-8565 CPU and 8 GB of RAM.
Our implementations are available at \url{https://github.com/Xinyi-Yu/ATR-cases}, where more details can be found.

\subsection{Computation Time Analysis}\label{subsec:case1}
First, we present a set of case studies to analyze the computation time and how many constraints and variables are introduced in the corresponding optimization problems. For this purpose, we consider the following function constraints and STL formulae, 
\begin{align}
    & c_1(x,k) := \left\{
    \begin{array}{cl}
        x_1 + x_2 -1, & \text{for  all } k \in [5,10] \\
        x_1 - x_2 -2, & \text{for  all } k \in [17,25] \\
        1, & \text{otherwise}
    \end{array}
    \right. \nonumber \\
    & c_2(x,k) := \left\{
    \begin{array}{cl}
        (x_1 - x_2)x_3 - 1, & \text{for  all } k \in [5,6] \\
        x_1 + x_2 + x_3 - 3, & \text{for  all } k \in [11,15] \\
        1, & \text{otherwise}
    \end{array}
    \right. \nonumber \\
    & \phi_1 := \mathbf{G}_{[1, 10]}(x_1-x_2)^2<1, \nonumber \\
    & \phi_2 :=  \mathbf{G}_{[7, 15]}(x_1-x_2+x_3)>0.5 \wedge \mathbf{F}_{[1,9]} x_1(x_2-x_3)<0.2, \nonumber \\
    & \phi_3 := \mathbf{F}_{[3, 15]} \mathbf{G}_{[0,8]} (x_1 + x_2 + x_3) >1. \nonumber
\end{align}

We consider simple  agent dynamics $x_i(k+1) = x_i(k)+u_i(k)$ with initial state $x(0) :=0$ and where $x_i$ and $u_i$ are one-dimensional states and inputs with  constraint sets $\mathcal{X}_i:=[-2,2]$ and $\mathcal{U}_i:=[-2,2]$. 
For the tasks of $c_1(x,k)$ and $\phi_1$, we consider $N=2$ agents, and for the remaining cases we consider $N=3$ agents. Furthermore, the required bounds for $c_1$, $c_2$, $\phi_1$, $\phi_2$ and $\phi_3$ are $[-4,4]$, $[-1,1]$, $[-3,3]$, $[-1,2]$ and $[-3,4]$ respectively. 
The cost function is $J := \Sigma_{k=0}^{T-1} u(k)^T  u(k)$ where $T$ is the final time.

In Table \ref{table:1}, we report the solve times for all five case studies as well as the number of constraints and binary variables that we introduced. We also compare our results to a naive approach where we combinatorially explore all time shifts as discussed in the beginning of Section \ref{sec:sol_c}.
As shown in Table \ref{table:1}, our proposed method is more efficient and performs better than the naive approach, especially when it comes to more complex tasks (e.g., $c_2$, $\phi_2$ and $\phi_3$).

\begin{table}[htb] 
    \begin{center}    
    \caption{Results report of academic cases in Section \ref{subsec:case1}}
    \label{table:1}  
    \begin{threeparttable}[b]
        \begin{tabular}{cccccccc}
            \toprule
            \multirow{2}{*}{\textbf{Tasks}} & \multicolumn{2}{c}{\textbf{Naive method}} & \multicolumn{2}{c}{\textbf{Our method}} \\
            \cmidrule(lr){2-3} \cmidrule(lr){4-5} 
            & Number\tnote{1} &  Solver time (s)  & Number\tnote{1} & Solver time (s) \\
            \midrule
            $c_1$ & 1215 & 0.0404& 383  & 0.0169  \\
            $c_2$ & 189 & 11.2559 & 149 & 1.0292  \\
			$\phi_1$ & 539 & 0.0424 & 215  & 0.0109   \\
			$\phi_2$ & 1280 & 15.858 & 875 & 1.5807   \\
			$\phi_3$ & 17920 & 84.001 & 6944 & 6.7040   \\
            \bottomrule
        \end{tabular}
        \begin{tablenotes}
        \item[1] Number means the number of inequalities for function constraint $c_1$ and $c_2$ while it means the number of binary variables for STL formulae $\phi_1$, $\phi_2$ and $\phi_3$.
        \end{tablenotes}
    \end{threeparttable}
    \end{center}   
\end{table}

\subsection{Multi-Robot Case Study}\label{subsec:case2}

Consider again the workspace in the motivating example in Section \ref{sec:intro}. Recall that we consider two agents, and assume that both agents have the same dynamics which we model as a double integrator system with a sampling period of $1$ second, i.e., that
\[
    x(k+1) = 
	\begin{bmatrix}
		1 & 1 & 0 & 0 \\
		0 & 1 &  0 & 0 \\
		0 & 0 & 1 & 1 \\
		0 & 0 & 0 & 1 
	\end{bmatrix} x(k) + 
	\begin{bmatrix} 
		0.5 & 0   \\ 
		1 & 0 \\ 
		0 & 0.5 \\ 
		0 & 1 
	\end{bmatrix} u(k),
\]
where $x = [p_x \ v_x \ p_y \ v_y]^T$ denotes $x$-position, $x$-velocity, $y$-position and $y$-velocity, and control input $u = [u_x \ u_y]^T$ denotes $x$-acceleration and $y$-acceleration, respectively. The state and input constraints are $\mathcal{X} = [0,10] \times [-2.5, 2.5] \times [0,10] \times [-2.5, 2.5]$ and $\mathcal{U} = [-2.5, 2.5]^2$.
We stack both models together so that we have $n=8$ and  $N=2$.

\textbf{Tasks:} We consider two tasks represented by the constraint function and the STL formula from Examples 1 and 2, respectively.  The specific representations of $\mu_1(x), \mu_2(x), \mu_3(x)$ are rectangles shown as corresponding cyan and grey areas in Figure \ref{fig:sim}, and we set $D$ to be $1$ here.
The cost function is $J= J(x_1, u_1)+J(x_2, u_2)$ where $ J(x, u) = \Sigma_{k=0}^{20}0.1(v_x^2(k)+v_y^2(k))+0.9(u_x^2(k)+u_y^2(k))$.

\textbf{Results:} 
We consider the four different ATR bounds  $[\theta_1, \theta_2] = [0,0]$ (no ATR bound), $[-1,1]$, $[-2,2]$, and $[-3,3]$ respectively. The computed trajectories for both tasks are shown in Figure \ref{fig:sim}. 
The computation time for four ATR bounds in the case of Constraint functions are 0.0079s, 0.0129s, 0.0169s and 0.0210s while those of STL tasks are 0.2891s, 4.9307s, 40.3630s and 46.7640s.
First, for the constraint $c$ (top row), we can see that stricter ATR constraints (from left to right) lead to temporally more robust paths. For an ATR bound of $[-3,3]$, both agents will arrive earlier in the middle region and stay in it for a longer time with agents staying closer to each other. Second, for the STL task (bottom row), we can observe a similar behavior.

\section{Conclusion}\label{sec:con}
We proposed an efficient control synthesis method under asynchronous temporal robustness constraints  for high-level tasks described by  function constraints and signal temporal logic formulae. Given a temporal robustness bound, we compute a sequence of control inputs so that the specification is satisfied by the system as long as each sub-trajectory is shifted no more than the ATR bound. We avoid combinatorially exploring all shifted sub-trajectories by identifying redundancy between them. In simulations, we show the efficiency of our method and how we can synthesize temporally robust trajectories. In the future, we aim to explore to maximize the asynchronous temporal robustness and consider decentralized control synthesis for multi-agent systems. 

\bibliographystyle{ieeetr}
\bibliography{ATR}

\end{document}